\newtheorem{theorem}{Theorem}
\newtheorem{lemma}{Lemma}
\theoremstyle{remark}
\newtheorem{example}{Example}
\newcommand{\smallparagraph}[1]{\medskip\noindent\textbf{#1}\;}
\newcommand{\set}[1]{{\{#1\}}}
\newcommand{\tuple}[1]{{\langle#1\rangle}}
\newcommand{\overlineindex}[1]{\overline{#1\mkern-3mu}\mkern 3mu}
\newcommand{\ldot}{\mathpunct{.}}
\newcommand{\card}[1]{|#1|}
\newcommand{\dom}{\mathrm{dom}}
\newcommand{\bigO}{\mathcal{O}}
\newcommand{\bool}{\mathbb{B}}
\newcommand{\limplies}{\rightarrow}
\newcommand{\quant}{\mathop{Q}}
\newcommand{\dep}{\mathit{dep}}
\newcommand{\var}{\mathit{var}}
\newcommand{\lit}{\mathit{lit}}
\newcommand{\assignment}{\alpha}
\newcommand{\Assignment}{\mathcal{A}}
\newcommand{\subassign}{\sqsubseteq}
\newcommand{\assunion}{\sqcup}
\newcommand{\free}{\mathit{free}}
\newcommand{\subf}{\mathit{sf}}
\newcommand{\dsubf}{\mathit{dsf}}
\newcommand{\boolf}{\mathcal{B}}
\newcommand{\sftype}{\mathit{type}}
\renewcommand{\models}{\vDash}
\newcommand{\nmodels}{\nvDash}
\newcommand{\compat}{\prec}
\newcommand{\enc}{\mathit{enc}}
\newcommand{\encout}{\mathit{out}}
\newcommand{\sat}{\textsc{sat}}
\newcommand{\SAT}{\mathrm{SAT}}
\newcommand{\UNSAT}{\mathrm{UNSAT}}
\newcommand{\res}{\mathit{result}}
\newcommand{\quabs}{\textsc{QuAbS}}
\newcommand{\cqesto}{\textsc{cQESTO}}
\newcommand{\caqe}{\textsc{CAQE}}
\newcommand{\ghostq}{\textsc{GhostQ}}
\newcommand{\qfun}{\textsc{QFUN}}
\newcommand{\qute}{\textsc{Qute}}
\newcommand{\bosy}{\text{BoSy}}
\newcommand{\absproof}{\mathcal{P}}
\newcommand{\invabs}{\mathcal{C}}
\newcommand{\absqbf}{\textsc{abstraction-qbf}}
\title{Solving QBF by Abstraction\thanks{Supported by the German Research Foundation (DFG) Grant Petri Games (No.\ 392735815) and by the European Research Council (ERC) Grant OSARES (No.\ 683300).}}
\author{%
Jesko Hecking-Harbusch\institute{Reactive Systems Group\\Saarland University}%
\and%
Leander Tentrup\institute{Reactive Systems Group\\Saarland University}%
}
\begin{document}

\maketitle

\begin{abstract}
Many verification and synthesis approaches rely on solving techniques for quantified Boolean formulas~(QBF).
Consequently, solution witnesses, in the form of Boolean functions, become more and more important as they represent implementations or counterexamples.
We present a recursive counterexample guided abstraction and refinement algorithm~(CEGAR) for solving and certifying QBFs that exploits structural reasoning on the formula level.
The algorithm decomposes the given QBF into one propositional formula for every block of quantifiers that abstracts from assignments of variables not bound by this quantifier block.
Further, we show how to derive an efficient certification extraction method on top of the algorithm.
We report on experimental evaluation of this algorithm in the solver $\quabs$ (Quantified Abstraction Solver)  which won the most recent QBF competition (QBFEVAL'18).
Further, we show the effectiveness of the certification approach using synthesis benchmarks and a case study for synthesizing winning strategies in Petri Games.

\end{abstract}

\section{Introduction} \label{sec:introduction}

Synthesis is the task to produce correct-by-design implementations from formal specifications.
This allows the developer to focus on what to achieve in form of the specification instead of focussing on how to implement requirements.
The synthesis task is usually formulated as a two-player game between the \emph{system} player whose objective is to satisfy the specification and the \emph{environment} player who tries to falsify the specification.
There are many variants of such games in literature, suitable for different types of systems, such as synchronous, asynchronous, and distributed ones, and for different kinds of objectives, such as safety objectives, $\omega$-regular winning conditions, and beyond.
Determining the winner of a synthesis game, which is equivalent to the answer whether the underlying specification is \emph{realizable}, gives us the knowledge of whether or not an implementation exists that satisfies the specification.
In the best case, we can directly construct an implementation from a winning strategy of the synthesis game.

In this paper, we consider the satisfiability problem of quantified Boolean formulas~(QBF), which can be formulated as a game between the \emph{existential} and \emph{universal} player, controlling the existential and universal quantifiers, respectively.
QBF has been used to encode the realizability problem for many of the specifications and games mentioned before, such as symbolically represented safety games~\cite{conf/vmcai/BloemKS14}, the LTL realizability problem~\cite{conf/tacas/FaymonvilleFRT17}, distributed and fault-tolerant synthesis~\cite{conf/tacas/FinkbeinerT14,journals/corr/FinkbeinerT15}, and asynchronous systems using Petri games~\cite{conf/birthday/Finkbeiner15}.
As a side-effect of those encodings, a certification of the QBF solving result in many cases directly corresponds to winning strategies and implementations.
\emph{QBF certification} is the task to extract \emph{Skolem} functions for the existential quantifiers of true QBFs and \emph{Herbrand} functions for the universal quantifiers of false QBFs.

Despite its benefits, QBF certification is a weak spot of current solving algorithms.
There are a number of works in the literature~\cite{journals/fmsd/BalabanovJ12,conf/sat/NiemetzPLSB12,conf/fmcad/HeuleSB14,conf/fmcad/RabeT15} for certifying QBFs given in conjunctive normal form (CNF), but in practice it involves performance penalties due to non-applicable solving optimizations and limited preprocessing\footnote{In QBFEVAL'16, the best certifying QBF solver has solved less than half of the number of instances solved by the best non-certifying solver~\cite{conf/sat/Pulina16}. In the following two iterations of QBFEVAL up to this paper, certification has not been evaluated.}.
We believe that certification must be treated as a first-class citizen of QBF solving and show that it is possible to have competitive performance and solution extraction at the same time.
A crucial approach to this goal is that we consider formulas in negation normal form~(NNF) instead of CNF.
Using (the less restrictive) NNF, the QBF solving problem becomes dual with respect to negation, removing the inherent imbalance of CNF solving algorithms~\cite{conf/epia/JanotaM17}.

We present a counterexample guided abstraction and refinement (CEGAR) algorithm for solving quantified Boolean formulas that exploits the propositional formula's structure.
The algorithm decomposes the given QBF into one propositional formula for every maximal block of consecutive quantifiers of the same type.
We call this formula an abstraction, because it \emph{abstracts} from assignments of variables that are not bound by this block of quantifiers of the same type.
We use special \emph{interface variables} to communicate \emph{assumptions} (outer-to-inner quantifier) and \emph{learned information} (inner-to-outer quantifier) during solving.
Further, we use a SAT solver as an oracle to generate new abstraction entries and to provide us with witnesses for unsatisfiable queries.
Given a QBF, the algorithm proceeds by generating a candidate solution using a SAT solver and the abstraction.
Then, this candidate is verified (or refuted) recursively and, depending on the result, the abstraction is refined.
We introduce a new element to QBF refinement algorithms by maintaining, for every quantifier block, a \emph{dual abstraction} and make twofold use of it: It provides a method for optimizing abstraction entries and it is used to translate counterexamples from one quantifier block to another.

To sum up, this paper makes the following contributions:
\begin{itemize}
  \item We provide a counterexample guided abstraction and refinement (CEGAR) algorithm for solving QBFs in negation normal form.
  \item We describe an efficient certification approach and evaluate it on synthesis benchmarks where implementations can be obtained from certificates.
  \item As a case study, we show how to make use of the certification feature to build strategies representing implementations and counterexamples for Petri games.
\end{itemize}

\section{Quantified Boolean Formulas} \label{sec:qbf}

A quantified Boolean formula~(QBF) is a propositional formula over a finite set of variables $\mathcal{X}$ extended with quantification.
The syntax is given by the grammar
\begin{equation*}
  \varphi \coloneqq x \mid \neg \varphi \mid \varphi \lor \varphi \mid \varphi \land \varphi \mid \exists x \ldot \varphi \mid \forall x \ldot \varphi \enspace,
\end{equation*}
where $x \in \mathcal{X}$. 
For readability, we lift the quantification over variables to the quantification over sets of variables and denote a maximal consecutive block of quantifiers of the same type $\forall x_1\ldot \forall x_2\ldot \cdots\forall x_n\ldot\varphi$ by $\forall X\ldot\varphi$ and $\exists x_1\ldot\exists x_2\ldot\cdots\exists x_n\ldot\varphi$ by $\exists X\ldot\varphi$, accordingly, where $X=\set{x_1,\dots,x_n}$. 

Given a subset of variables $X \subseteq \mathcal{X}$, an \emph{assignment} of $X$ is a function $\assignment : X \rightarrow \bool$ that maps each variable $x \in X$ to either true ($1$) or false ($0$).
When the domain of $\alpha$ is not clear from context, we write $\alpha_X$.
A partial assignment $\beta : X \rightarrow \bool \cup \set{\bot}$ may additionally set variables $x \in X$ to an undefined value~$\bot$.
We say that $\beta$ is \emph{compatible} with $\alpha$, written $\beta \subassign \alpha$, if they have the same domains ($\dom(\alpha) = \dom(\beta)$) and $\alpha(x) = \beta(x)$ for all $x \in \dom(\alpha)$ where $\beta(x) \neq \bot$.
For two assignments $\alpha$ and $\alpha'$ with domains $X =\dom(\alpha)$ and $X' = \dom(\alpha')$, we define the combination $\alpha \assunion \alpha' : X \cup X' \rightarrow \bool$ as $\alpha \assunion \alpha'(x) = \assignment'(x)$ if $x \in X'$ and $\assignment(x)$ otherwise.
Note that $\assignment'$ overrides $\assignment$ for $x \in X \cap X'$.
We define the \emph{complement} $\overline{\alpha}$ to be $\overline{\alpha}(x) = \neg\alpha(x)$ for all $x \in \dom(\alpha)$.
The complement of a partial assignment is defined analogously with $\neg\bot = \bot$. 
We denote by $\assignment \setminus X$ the assignment without the assignments for every $x \in X$, i.e., $\dom(\assignment \setminus X) = \dom(\assignment) \setminus X$.
The \emph{set of assignments} and \emph{of partial assignments} of $X$ is denoted by $\Assignment(X)$ and $\Assignment_\bot(X)$, respectively.

\begin{example}
  Consider the assignments $\assignment = \set{x \mapsto 0, y \mapsto 1}$ and $\assignment' = \set{x \mapsto 0, y \mapsto 0}$ with $\dom(\assignment) = \dom(\assignment') = \set{x, y}$.
  For the partial assignment $\beta = \set{x \mapsto \bot, y \mapsto 1}$ it holds that $\beta \subassign \assignment$ and $\beta \not\subassign \assignment'$.
  Let $\assignment^* = \assignment \setminus \set{y} = \set{x \mapsto 0}$, then $\assignment \assunion \overline\assignment^* = \set{x \mapsto 1, y \mapsto 1}$.
\end{example}

A quantifier $\quant x \ldot \varphi$ for $\quant \in \set{\exists,\forall}$ \emph{binds} the variable $x$ in the \emph{scope} $\varphi$. 
Variables that are not bound by a quantifier are called \emph{free}.
The set of free variables of formula $\varphi$ is defined as $\free(\varphi)$.
The semantics of the satisfaction relation $\alpha \models \varphi$ is given as
\begin{equation*}
  \begin{array}{ll}
    \alpha \models x &  \text{if } \alpha(x) = 1, \\
    \alpha \models \neg\varphi &  \text{if } \alpha \nmodels \varphi, \\
    \alpha \models \varphi \lor \psi &  \text{if } \alpha \models \varphi \text{ or } \alpha \models \psi,\\
    \alpha \models \varphi \land \psi &  \text{if } \alpha \models \varphi \text{ and } \alpha \models \psi,\\
    \alpha \models \exists x \ldot \varphi &  \text{if some } \alpha' : \set{x} \rightarrow \bool \text{ satisfies } \alpha \assunion \alpha' \models \varphi, \text{ and} \\ 
    \alpha \models \forall x \ldot \varphi &  \text{if all } \alpha' : \set{x} \rightarrow \bool \text{ satisfy } \alpha \assunion \alpha' \models \varphi . \\ 
  \end{array}
\end{equation*}
\emph{QBF satisfiability} is the problem to determine, for a given QBF~$\varphi$, the existence of an assignment $\alpha$ for the free variables of $\varphi$, such that the relation $\models$ holds. 

An existentially quantified variable~$x$ \emph{depends} on all universally quantified variables that are bound prior to $x$.
A universally quantified variable~$x$ \emph{depends} on all existentially quantified variables bound prior to $x$ and additionally on the free variables.
A free variable $x$ \emph{depends} on no variables.
The set of dependencies of $x$ is denoted by $\dep(x)$.
A Boolean function $f : \Assignment(X) \rightarrow \bool$ maps \emph{assignments} of $X$ to true or false.
An assignment $\alpha$ over variables $X$ can be identified by the conjunctive formula $\bigwedge_{x \in X \mid \alpha(x)=1} x \land \bigwedge_{x \in X \mid \alpha(x)=0} \neg x$.
Similarly, Boolean functions can be represented by propositional formulas over the variables in their domain.
Let $\varphi[f_{x_1},\dots,f_{x_n}]$ be the propositional formula where occurrences of $x_i$ are replaced by the propositional representation of $f_{x_i}$.
It is defined as
\begin{align*}
  x[f_{x_1},\dots,f_{x_n}] &{}=
  \begin{cases}
    f_{x_i} & \text{if } x = x_i \text{ for some } i\\
    x & \text{otherwise} 
  \end{cases} \\
  (\neg\varphi)[f_{x_1},\dots,f_{x_n}] &{}= \neg (\varphi[f_{x_1},\dots,f_{x_n}]) \\
  (\varphi \lor \psi)[f_{x_1},\dots,f_{x_n}] &{}= (\varphi[f_{x_1},\dots,f_{x_n}]) \lor (\psi[f_{x_1},\dots,f_{x_n}]) \\
  (\varphi \land \psi)[f_{x_1},\dots,f_{x_n}] &{}= (\varphi[f_{x_1},\dots,f_{x_n}]) \land (\psi[f_{x_1},\dots,f_{x_n}]) \\
  (\exists x\ldot \varphi) [f_{x_1},\dots,f_{x_n}] &{}= \varphi[f_{x_1},\dots,f_{x_n}] \\
  (\forall x\ldot \varphi) [f_{x_1},\dots,f_{x_n}] &{}= \varphi[f_{x_1},\dots,f_{x_n}] 
\end{align*}
For example, let $\varphi = \forall x \ldot \exists y \ldot (x \lor \neg y) \land (\neg x \lor y)$ and let $f_y(x) = x$, then $\varphi[f_y] = (x \lor \neg x) \land (\neg x \lor x)$.
A witness for a satisfiable QBF is a \emph{Skolem function} $f_x :\Assignment(\dep(x)) \rightarrow \bool$ for every variable $x$ that is free or existentially quantified, such that $\neg\varphi[f_{x_1},\dots,f_{x_n}]$ is unsatisfiable.
For unsatisfiable QBFs, the witnesses are defined dually and called \emph{Herbrand functions}.
We use the notation $\varphi[\assignment]$ to replace variables $x \in \dom(\assignment)$ by their assignments~$\assignment(x)$.

A \emph{closed} QBF is a formula without free variables. 
Closed QBFs are either true or false. 
A formula is in prenex form, if the formula consists of a quantifier prefix followed by a propositional formula. 
Every QBF can be transformed into a closed QBF and into prenex form while maintaining satisfiability.
A \emph{literal} $l$ is either a variable $x \in X$, or its negation $\neg x$. 
Given a set of literals $\set{l_1,\dots,l_n}$, the disjunctive combination $(l_1 \lor \ldots \lor l_n)$ is called a \emph{clause} and the conjunctive combination  $(l_1 \land \ldots \land l_n)$ is called a \emph{cube}.
We denote by $\var(l)$ the operation that returns the variable corresponding to $l$.
A QBF is in negation normal form (NNF) if negation is only applied to variables.
Every QBF can be transformed into NNF by at most doubling the size of the formula and without introducing new variables.
For formulas in NNF, we treat literals as atoms.

\section{Abstraction-based Algorithm} \label{sec:solving}

For QBFs given in CNF, there are recursive refinement algorithms where the refinement is based on clauses~\cite{conf/ijcai/JanotaM15,conf/fmcad/RabeT15,conf/cav/Tentrup17}.
The underlying insight is that multiple variable assignments may lead to the satisfaction of the same clauses, hence, instead of communicating assignments, the information whether a clause is satisfied or not is communicated between quantifier blocks.
Instead of excluding assignments one at a time, those algorithms may exclude multiple assignments with a single refinement step.
In the following, we propose a generalization to formulas in negation normal form, i.e., we base the communication on the satisfaction of individual subformulas.
For this section, we assume an arbitrary (closed, prenex) QBF $\Phi = \quant X_1 \dots \quant X_n \ldot \varphi$ with quantifier prefix $\quant X_1 \dots \quant X_n$ and propositional body $\varphi$ in NNF.

\smallparagraph{SAT solver.}
We use a generic solving function $\Call{sat}{\theta, \assignment}$ for propositional formula $\theta$ and assignment~$\assignment$, that returns whether $\theta \land \assignment$ is satisfiable.
In the positive case, written $\Call{sat}{\theta, \assignment} \Rightarrow \SAT(\assignment')$, it returns a satisfying assignment $\assignment'$.
We write $\Call{sat}{\theta, \assignment} \Rightarrow \SAT(\assignment_V)$ if we are only interested in a subset $V$ of the variables in $\theta$.
In the negative case, written $\Call{sat}{\theta, \assignment} \Rightarrow \UNSAT(\beta)$, it returns a partial assignment $\beta \subassign \assignment$ such that $\theta \land \beta$ is unsatisfiable.

\begin{example}
  We show a few examples of the usage of the $\sat$ function using $\theta = (x \lor (\overline{x} \land y))$ and $\overline\theta = (\overline{x} \land (x \lor \overline{y}))$.
  \begin{align*}
      &\Call{sat}{\overline\theta, \set{}} \Rightarrow \SAT(\assignment_\set{x}) \hspace{19pt} \text{where } \assignment_\set{x} = \set{x \mapsto 0} \\
      &\Call{sat}{\theta, \assignment_\set{x}} \Rightarrow \SAT(\assignment_\set{y})  \quad \text{where } \assignment_\set{y} = \set{y \mapsto 1} \\
      &\Call{sat}{\overline\theta, \assignment_\set{x} \assunion \assignment_\set{y}} \Rightarrow \UNSAT(\set{x \mapsto 0, y \mapsto 1})
  \end{align*}
\end{example}

\smallparagraph{Notation.}
To facilitate working with arbitrary Boolean formulas, we start with introducing additional notation.
Let $\boolf$ be the set of Boolean formulas and let $\subf(\psi) \subset \boolf$ ($\dsubf(\psi) \subset \boolf$) be the set of (direct) subformulas of $\psi$ (note that $\psi \in \subf(\psi)$ but $\psi \notin \dsubf(\psi)$).
For a propositional formula $\psi$, $\sftype(\psi) \in \set{\lit,\lor,\land}$ returns the Boolean connector if $\psi$ is not a literal.
For example, given $\psi = (x_1 \lor (\overlineindex{x_1} \land x_2))$, $\subf(\psi) = \set{(x_1 \lor (\overlineindex{x_1} \land x_2)), x_1, (\overlineindex{x_1} \land x_2), \overlineindex{x_1}, x_2}$, $\dsubf(\psi) = \set{x_1, (\overlineindex{x_1} \land x_2)}$, and $\sftype(\psi) = \lor$.

\smallparagraph{Interface Variables.}
To communicate the value of subformulas, we introduce two special types of variables which we call \emph{interface variables}.
The value of those variables represents whether the value of a subformula is determined and in the positive case, the value itself.
We only consider existential quantifiers as the definition for the universal quantifiers is dual with respect to negation.
For a quantifier $\exists X$, we say that a subformula $\psi$ is \emph{positive} if $\psi$ is conjunctive ($\sftype(\psi) = \land$) and not falsified or $\psi$ is disjunctive ($\sftype(\psi) = \lor$) and satisfied.
A subformula is \emph{negative} if it is not positive.
At quantifier $\exists X$, variable assignments determine whether a subformula $\psi$ is positive or negative.
The interface variables~$t_\psi$ and $b_\psi$ represent whether $\psi$ is positive, with the difference that $t_\psi$ combines assignments from variables bound by outer quantifiers whereas $b_\psi$ combines assignments from variables bound by outer quantifiers and from variables $X$.
We denote the set of variables $t_\psi$ by $T$ and call them $T$~variables and analogously the set of variables $b_\psi$ by $B$ and call them $B$ variables.
For solving $\quant X$ in the abstraction algorithm, we replace the communication of variable assignments by communicating assignments to $T$ and $B$ variables.

\smallparagraph{Abstractions.}
An \emph{abstraction} for quantifier $\quant X$ is a propositional formula $\theta_X$ over variables $X$, $T$, and $B$.
The sets $T_X$ and $B_X$ contain those $T$ and $B$ variables that are used for communication at this quantifier level.
Unlike previous approaches that utilize SAT solvers~\cite{journals/ai/JanotaKMC16,conf/ijcai/JanotaM15,conf/fmcad/RabeT15,conf/sat/TuHJ15}, we keep for every quantifier level a \emph{dual abstraction} $\overline{\theta}_X$ that is used for optimization of abstraction entries and translating interface variables.
When translating a $B$ variable to a $T$ variable, we use the same index, e.g., in line~\ref{alg:abstraction-translation} of Alg.~\ref{alg:abstraction}, the $B$ variables~$b_{\psi}$ are translated to $T$ variables $t_{\psi}$ of the inner quantifier.
Before going into algorithmic details, we describe an execution of Algorithm~\ref{alg:abstraction} on $\Phi_\mathit{ex}$.
\begin{example} \label{ex:abstraction-algorithm}
Consider the example $\Phi_\mathit{ex} = \forall x \ldot \exists y \ldot \overbrace{(x \lor \underbrace{(\overline{x} \land y)}_{\psi_2})}^{\psi_1}$ and its negation $\neg\Phi_\mathit{ex} = \exists x \ldot \forall y \ldot \overbrace{(\overline{x} \land \underbrace{(x \lor \overline{y})}_{\neg\psi_2})}^{\neg\psi_1}$.
Assume that $\theta_x = b_1\land(b_1 \limplies \overline{x})\land(b_2 \limplies x)$ is the abstraction for quantifier $\forall x$ and that $\theta_y = (t_1 \lor b_2)\land(b_2 \limplies t_2)\land(b_2 \limplies y)$ is the abstraction for quantifier $\exists y$ (the definition of abstraction is given later).

The algorithm starts with the top level quantifier $\forall x$.
The universal player has to set $x$ to false to satisfy $\theta_x$, leading to the unique assignment $\assignment_B = \set{b_1 \mapsto 1, b_2 \mapsto 0}$ of B variables.
This means that $\neg\psi_1$ is positive and $\neg\psi_2$ is negative in $\neg\Phi_\mathit{ex}$.
Due to duality, $\psi_1$ is negative and $\psi_2$ is positive in $\Phi_\mathit{ex}$.
Thus, the assignment $\assignment_B$ is translated into assignment $\assignment_T$ of $T$ variables in $\theta_y$ by negation ($\assignment_T = \set{ t_1 \mapsto 0, t_2 \mapsto 1}$, line~\ref{alg:abstraction-translation}).
At the existential quantifier $\exists y$, the abstraction $\theta_y$ is solved under the assumption $\assignment_T$, resulting in a satisfiable query with variable assignment $\set{y \mapsto 1}$.
We then use the dual abstraction $\overline{\theta}_y = b_1\land( b_1 \limplies t_1)\land(b_1 \limplies b_2)\land(b_2 \limplies t_2 \lor \overline{y})$ to translate $\assignment_T$ to a partial assignment $\beta_T$ (line~\ref{alg:abstraction-return-propositional}).
The partial assignment $\set{t_2 \mapsto 0, y \mapsto 1}$ is enough to falsify the dual abstraction $\overline{\theta}_y$, thus, the assumption $\assignment_T(t_2) = 1$ is needed to satisfy $\theta_y$ and the partial assignment $\beta_T$ with $\beta(t_2) = 1$ is returned (line~\ref{alg:abstraction-return-propositional}).
The following refinement forces that $b_2$ must be set to true in the next iteration, i.e., $\theta_x = \theta_x \land b_2$.
This depletes all possible assignments of the universal quantifier, thus, proving that the instance is true.
\end{example}

\begin{algorithm}[t]
\caption{Abstraction Based Algorithm} \label{alg:abstraction}
\begin{algorithmic}[1]
\Procedure{abstraction-qbf-rec}{$\quant X \ldot \varphi, \assignment_{T_X}$}
  \While{$\Call{sat}{\theta_X, \assignment_{T_X}} \Rightarrow \SAT(\assignment_X \assunion \assignment_{B_X})$} \label{alg:abstraction-sat-query} \Comment{generate candidate $\assignment_{B_X}$}
    \If{$\varphi$ is propositional} \label{alg:abstraction-start-verification}
      \State \Return $\tuple{\SAT_Q, \Call{dual-opt}{\assignment_X, \assignment_{T_X}}}$ \label{alg:abstraction-return-propositional}
    \EndIf
    \State $\assignment_{T_Y} \gets \Call{translate}{X, \varphi, \assignment_{B_X}}$
    \State $\tuple{\res, \beta_{T_Y}} \gets \Call{abstraction-qbf-rec}{\varphi, \assignment_{T_Y}}$ \label{alg:abstraction-recursive-call} \Comment{verify $\assignment_{T_Y}$ recursively }
    \If{$\res = \SAT_Q$}
      \State $\overline{\theta}_X \gets \overline{\theta}_X \land \Call{refine$_X$}{\beta_{T_Y}}$ \label{alg:abstraction-refinement-dual} \Comment{refine dual abstraction}
      \State \Return $\tuple{\SAT_Q, \Call{dual-opt}{\assignment_X, \assignment_{T_X}}}$ \label{alg:abstraction-return-recursion}
    \EndIf
    \State $\theta_X \gets \theta_X \land \Call{refine$_X$}{\beta_{T_Y}}$ \label{alg:abstraction-refinement} \Comment{refine abstraction}
  \EndWhile
  \State let $\beta_{T_X}$ be the failed assumptions ($\Call{sat}{\theta_X, \assignment_{T_X}} \Rightarrow \UNSAT(\beta_{T_X})$)
  \State \Return $\tuple{\UNSAT_Q, \beta_{T_X}}$ \Comment{$\beta_{T_X} \subassign \assignment_{T_X}$}
\EndProcedure
\Procedure{refine$_X$}{$\beta_{T_Y}$}
  \State \Return $\bigvee_{b \in B} b$ \textbf{where} $B = \set{b_{\psi} \in B_X \mid \beta_{T_Y}(t_{\psi})=1}$
\EndProcedure
\Procedure{dual-opt}{$\assignment_X$, $\assignment_{T_X}$}
  \State $\Call{sat}{\overline{\theta}_X, \assignment_X \assunion \overline{\assignment}_{T_X}} \Rightarrow \UNSAT(\beta_{T_X})$ \label{alg:abstraction-dual-opt}
  \State \Return $\overline{\beta}_{T_X}$ \Comment{$\overline{\beta}_{T_X} \subassign \assignment_{T_X}$}
\EndProcedure
  \Procedure{translate}{$X$, $\quant\, Y \ldot \varphi$, $\assignment_{B_X}$}
    \State \Return $\assignment_{T_Y}$ s.t. $\assignment_{T_Y}(t_{\psi}) = \overline{\assignment}_{B_X}(b_{\psi})$ for all $t_{\psi} \in T_Y$ \label{alg:abstraction-translation} \Comment{translate $\assignment_{B_X} \rightarrow \assignment_{T_Y}$}
  \EndProcedure
  \Procedure{abstraction-qbf}{$\quant X_1 \dots \quant X_n \ldot \varphi$}
    \State \textbf{for all} $\quant X_i$, initialize $\theta_{X_i}$ and $\overline{\theta}_{X_i}$
    \State \Return $\Call{abstraction-qbf-rec}{\quant X_1 \dots \quant X_n \ldot \varphi, \set{}}$
  \EndProcedure
\end{algorithmic}
\end{algorithm}

\smallparagraph{Algorithm.}
The main procedure of Algorithm~\ref{alg:abstraction} is \textsc{abstraction-qbf-rec}, that recurses on the quantifier prefix.
In line~\ref{alg:abstraction-sat-query}, a candidate solution is generated (represented by an assignment $\assignment_{B_X}$) with respect to an assignment $\assignment_{T_X}$ given by the outer quantifier.
In the following, the candidate solution is verified recursively (line~\ref{alg:abstraction-recursive-call}) and in the negative case the abstraction $\theta_X$ is refined by a blocking clause (line~\ref{alg:abstraction-refinement}) that eliminates (at least) this candidate.

To verify a candidate $\assignment_{B_X}$ recursively, it is \emph{translated} into an assignment $\assignment_{T_Y}$ of the inner quantifier $\overline{\quant}\, Y$ (line~\ref{alg:abstraction-translation}).
This is done by negation since a subformula $\psi$ is positive for $\quant X$ iff it is negative for $\overline{Q}\,Y$.
The \emph{refinement} operation generates, given a counterexample represented as a partial assignment $\beta_{T_Y}$, a clause consisting of $B$ variables that excludes this counterexample:
For every $T$ variable $t_\psi$ that is contained in the counterexample and is positive for the inner quantifier, the refinement adds a $B$ variable $b_\psi$ meaning that one of those subformulas must be positive for $\quant X$ in the next iteration.

The dual abstraction $\overline{\theta}_X$ is defined as the abstraction for $\overline{\quant} X$ and is used in two ways.
First, it optimizes that candidate $\assignment_{B_X}$ in the propositional case (line~\ref{alg:abstraction-return-propositional}), i.e., it generates potentially smaller witnesses.
Second, it translates an assignment of $T_Y$ variables $\beta_{T_Y}$ to an assignment of $T_X$ variables $\beta_{T_X}$ that is returned to the outer quantifier (line~\ref{alg:abstraction-return-recursion}).

We now focus on the abstraction $\theta_X$.
In Example~\ref{ex:abstraction-algorithm}, we have already seen an instance of the abstraction that we formally introduce in the following.
The abstraction is a modification of the Plaisted-Greenbaum encoding~\cite{journals/jsc/PlaistedG86}: for subformula $\psi$, the $b$-literal $b_\psi$ corresponds to the defining literal of the Plaisted-Greenbaum encoding (see definition of $\enc$ in Fig.~\ref{fig:abstraction}).
$\enc_\psi(\psi')$ is responsible for abstracting from actual assignments: literals bound at the quantifier are returned unchanged, literals bound at an outer quantifier are abstracted as a $T$ variable, and we use the defining literal $b_{\psi'}$ of a subformula $\psi'$ if the valuation of the subformula is guaranteed to be fixed, i.e., there is no inner influence.

Given a propositional formula $\varphi$ in NNF and a quantifier $\exists X$, we build the following propositional formula in CNF representing the abstraction 
$\theta_X = \encout_\varphi(\varphi) \land \bigwedge_{\psi \in \subf(\varphi) \land \sftype(\psi) \neq \lit} \enc(\psi) $
for this quantifier, where $\encout$ encodes that $\varphi$ must hold and $\enc$ defines a CNF formula that encodes the truth of subformula $\psi$ with respect to the valuations of the current, inner, and outer quantifier represented by $B$ and $T$ variables, respectively.
The definitions are given in Fig.~\ref{fig:abstraction}.
\begin{figure}[t]
  \begin{align*}
    \enc(\psi) &{}=
      \begin{cases}
        \displaystyle\bigwedge\limits_{\substack{\psi' \in \dsubf(\psi)\\\enc_\psi(\psi')  \neq \bot}} \left(b_\psi \limplies \enc_\psi(\psi') \right) & \text{if } \sftype(\psi) = \land \\
        b_\psi \limplies \displaystyle\bigvee\limits_{\substack{\psi' \in \dsubf(\psi)\\\enc_\psi(\psi')  \neq \bot}} \enc_\psi(\psi') & \text{if } \sftype(\psi) = \lor 
      \end{cases}\\
    \enc_\psi(\psi') &{}= 
      \begin{cases}
        \psi' & \text{if } \sftype(\psi') = \lit \land \var(\psi') \in X \\
        t_\psi & \text{if } \sftype(\psi') = \lit \land \text{literal is bound by outer quantifier}\\
        b_{\psi'} & \text{if } \sftype(\psi') \neq \lit \land \psi' \text{ has no inner influence}\\
        \bot & \text{otherwise}
      \end{cases}\\
    \encout_\psi(\psi') &{}=
      \begin{cases}
        b_{\psi'} & \text{if } \sftype(\psi') = \land \\
        \bigvee_{\psi^* \in \dsubf(\psi')} \encout_{\psi'}(\psi^*) & \text{if } \sftype(\psi') = \lor \\
        \psi' & \text{if } \sftype(\psi') = \lit \land \var(\psi') \in X \\
        t_\psi & \text{if } \sftype(\psi') = \lit \land \text{literal is bound by outer quantifier}\\
        \neg b_\psi & \text{if } \sftype(\psi') = \lit \land \text{literal is bound by inner quantifier}\\
      \end{cases}
  \end{align*}
  \caption{Definition of the abstraction for quantifier block $\exists X$.}
  \label{fig:abstraction}
\end{figure}
The abstraction of a quantifier $\forall X$ is defined as the existential abstraction for $\neg\varphi$.
Note that not every $B$ literal that is used in the abstraction may be exposed as an interface literal.
For the given abstraction, we define the set of interface variables for quantifier $\quant X_i$ as
\begin{align*}
  &B_{X_i} = \set{b_\psi \mid \psi \in \subf(\varphi) \land \psi \text{ contains a variable bound} \leq i}, \text{ and}\\
  &T_{X_i} = \set{t_\psi \mid \psi \in \subf(\varphi) \land \psi \text{ contains a variable bound} < i}.
\end{align*}

\begin{theorem} \label{thm:abstraction-qbf-correct}
  $\absqbf$ is sound and complete.
\end{theorem}
The proof is given in Section~\ref{sec:correctness} and relies on techniques developed for certification in the next section.

\section{Certification} \label{sec:certification}

Certification is an essential component of QBF solving.
Certification amounts to extracting witnessing functions from a QBF, either \emph{Skolem} functions for true QBFs or \emph{Herbrand} functions for false QBFs.
Not only does it allow to verify the solver result, but the resulting functions can also be used in the context of the application.
The main result of this section is a proof format for our abstraction algorithm and an efficient algorithm to transform proof traces into Boolean functions.

\smallparagraph{Proof Format.}
To extract a witness from a run of $\absqbf$, we need to remember \emph{situations} and \emph{reactions}, represented by assignments to $T$ and $X$, that were satisfiable for the respective quantifier.
Hence, the proof $\absproof$ consists of a sequence of pairs $\tuple{\beta_T,\assignment_X} \in (\Assignment_\bot(T) \times \Assignment(X))$ and these pairs can be obtained from the algorithm by the result $\beta_{T_X}$ of the query to the dual abstraction~$\overline{\theta}_X$ in line~\ref{alg:abstraction-dual-opt}.
As an immediate consequence, the number of pairs in the proof trace is linear in the number of iterations of the algorithm.
We define a function $\invabs_X : \Assignment_\bot(T) \to \boolf(V)$ which, for a given quantifier $\quant X$, maps an assignment $\beta_T$ to a Boolean formula over variables $V$ bound by outer quantifiers (with respect to $X$).
Intuitively, $\invabs_X(\beta_T)$ describes those assignments that lead to $\beta_T$ in the abstraction of quantifier $\quant X$.

\smallparagraph{Function Extraction.}
Prior to the function extraction, we filter out those pairs from the proof $\absproof$ that correspond to the variables that are dependencies and do not describe a function, i.e., universal variables for true QBFs and existential variables for false QBFs.
The remaining proof consists of pairs $\tuple{\beta_T,\assignment_X}$ where $\invabs_X(\beta_T)$ is a formula that represents the situation where the response $\assignment_X$ is correct.
Let $\tuple{\beta_T^1,\assignment_X^1}\dots\tuple{\beta_T^n,\assignment_X^n}$ be the pairs corresponding to quantifier $\quant X$ and let $x \in X$ be some variable, the function $f_x : \Assignment(\dep(x)) \to \bool$ is defined as
\begin{equation}
  f_x \equiv \bigvee_{i=1}^n \left((\assignment_X^i(x)=1) \land\invabs_X(\beta_T^i) \land \bigwedge_{j < i}\neg \invabs_X(\beta_T^j) \right)
\end{equation}
This construction is similar to previous extraction algorithms, including~\cite{conf/fmcad/RabeT15,conf/innovations/BeyersdorffBC16}.
The definition of $\invabs_X$ allows that $f_x$ may depend on variables in outer quantifiers corresponding to functions instead of dependencies.
By replacing those variables with their extracted functions, one can make sure that $f_x$ depends only on $\dep(x)$.
The size of $f_x$, measured in terms of distinct subformulas, is linear in the number of pairs and, hence, linear in the size of the proof.

\begin{theorem}
  Given a QBF $\Phi$ and proof trace $\absproof$, the runtime of the function extraction algorithm is in $\bigO(\card{\absproof})$.
  The size of the resulting functions is linear in the size of~$\absproof$.
\end{theorem}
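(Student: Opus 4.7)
The plan is to show that, with sharing of common subexpressions (i.e., representing the extracted functions as a directed acyclic graph rather than as trees), the seemingly quadratic blow-up in the definition of $f_x$ collapses to a linear bound.

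First, I would introduce two families of shared subformulas, one per pair $\tuple{\ass{t_i},\ass{x_i}}$ of the proof $\absproof$. Define
\begin{align*}
c_i &{}:= \bigwedge_{t \in \ass{t_i}} \invabs(t_i, X),\\
d_i &{}:= d_{i-1} \land \neg c_i,\quad d_0 := \btrue .
\end{align*}
Each $c_i$ is built once, has size $O(\card{\ass{t_i}})$, and is referenced thereafter by name. Each $d_i$ is constructed from $d_{i-1}$ and $c_i$ in $O(1)$ additional size and time, so the total cost of maintaining the two families across the whole proof is $\sum_i O(\card{\ass{t_i}}) + O(n) = O(\card{\absproof})$.

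Next I would rewrite the defining formula for $f_x$ using these auxiliary names. Observing that the conjunct $(x \in \ass{x_i})$ is a Boolean constant and that $\bigwedge_{j<i}\neg c_j = d_{i-1}$, the definition collapses to
\begin{equation*}
f_x \equiv \bigvee_{\substack{1 \le i \le n \\ x \in \ass{x_i}}} \left( c_i \land d_{i-1}\right),
\end{equation*}
which is equivalent to the original definition (disjuncts with $x\notin\ass{x_i}$ vanish, and the surviving conjunct $(x\in\ass{x_i}) = \btrue$ is absorbed). In the DAG, this contributes $O(1)$ fresh nodes per index $i$ such that $x \in \ass{x_i}$. Summing over all $x$ in scope $X$ yields an overhead of $O\bigl(\sum_i \card{\ass{x_i}}\bigr)$ for all functions in this scope. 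Repeating the argument for every scope gives a total of
\begin{equation*}
  \sum_i \bigl( \card{\ass{t_i}} + \card{\ass{x_i}} \bigr) + O(n) \;=\; O(\card{\absproof}).
\end{equation*}

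For the run time, I would organise the pairs so that, while iterating $i = 1,\dots,n$, the algorithm (i) emits $c_i$ in time proportional to $\card{\ass{t_i}}$, (ii) updates $d_i$ in constant time, and (iii) appends the shared disjunct $c_i \land d_{i-1}$ to $f_x$ for every $x \in \ass{x_i}$, each append being $O(1)$. This gives the same $O(\card{\absproof})$ bound for run time. The only subtle step is the bookkeeping that guarantees the $d_i$ chain is shared across all variables and all pairs — which is standard once one commits to the DAG view of Boolean formulas that the paper has already adopted (``size \dots measured in terms of distinct subformulas''). That sharing argument is the main obstacle, and it is handled exactly by making the $c_i, d_i$ auxiliary definitions explicit.
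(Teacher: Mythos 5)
Your proposal is correct and follows the same approach the paper takes: the paper gives no explicit proof, but the sentence preceding the theorem (``The size of $f_x$, measured in terms of distinct subformulas, is linear in the number of pairs'') is exactly the DAG-sharing argument you spell out. Your explicit $c_i$, $d_i$ chain is a faithful and more rigorous rendering of that implicit argument, so nothing is missing or different in substance.
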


\smallparagraph{Certification.}
A \emph{certificate} is a representation of all functions that, combined, witness the result of the QBF.
A certificate is correct, if two conditions are satisfied: the certificate is (1) functionally correct and (2) well-formed.
Functional correctness can be checked by a propositional SAT query to $\neg\varphi$ (respectively $\varphi$ for false QBFs) where every occurrence of a function variable $y$ is replaced by the function $f_y$.
The unsatisfiability of this query witnesses functional correctness.
The well-formedness criterion concerns the representation of the certificate, usually as a circuit, and requires that the representation of a function depends only on its dependencies.
One can further differentiate \emph{syntactical} and \emph{semantical} well-formedness.
A certificate is syntactically ill-formed if a non-dependency is reachable from the output of a function.
A certificate is semantically ill-formed if a valuation change of a set of non-dependencies changes the valuation of a function.
Our function extraction guarantees syntactical well-formedness and therefore any further circuit simplification guarantees at least semantical well-formedness.

\begin{example} \label{ex:certificate}
  Consider again our example $\Phi_\mathit{ex} = \forall x \ldot \exists y \ldot \overbrace{(x \lor \underbrace{(\overline{x} \land y)}_{\psi_2})}^{\psi_1}$.
  It holds that $\invabs_\set{y}(\set{t_1 \mapsto 1}) = x$ and $\invabs_\set{y}(\set{t_2 \mapsto 1}) = \overline{x}$, because setting $x$ to true satisfies $\psi_1$ and setting it to false does not falsify~$\psi_2$.
  The proof trace for $\Phi_\mathit{ex}$ is $\tuple{\set{t_2 \mapsto 1}, \set{y \mapsto 1}}$ (see~Example~\ref{ex:abstraction-algorithm}) and the resulting Skolem function is $f_y(x) \equiv \invabs_\set{y}(\set{t_2 \mapsto 1}) \equiv \overline{x}$.
  To verify $f_y$, we check $\neg\Phi_\mathit{ex}[f_y] = \exists x \ldot (\overline{x} \land (x \lor \overline{\overline{x}}))$ for unsatisfiability.
\end{example}

\section{Correctness} \label{sec:correctness}

In the following, we formalize properties of the abstraction and prove the algorithm correct.
To relate variable assignments and assignments of $T$ variables, we use the function $\invabs_X$ which is defined in the previous section.
$\invabs_X(\beta_T)$ is a propositional formula over the outer variables $V$ (with respect to $X$) that describes the assignments leading to $\beta_T$ in the abstraction of quantifier $\quant X$.
An assignment $\assignment_V$ is \emph{compatible} with a partial assignment $\beta_{T_X}$, if it satisfies $\invabs_X(\beta_{T_X})$.
Then, we write $\assignment_V \compat \beta_{T_X}$ for short.

The proof of Theorem~\ref{thm:abstraction-qbf-correct} is done by induction on the structure of the quantifier prefix.
To prove the base case of the induction, Lemma~\ref{thm:abstraction-correctness}.\ref{thm:abstraction-inner-sat} states that a satisfiable result in the innermost quantifier corresponds to satisfaction and falsification of the propositional formula for the existential and universal player, respectively (see line~\ref{alg:abstraction-return-propositional} of Algorithm~\ref{alg:abstraction}).
Further, Lemma~\ref{thm:abstraction-correctness}.\ref{thm:abstraction-initial} states the correctness of early termination, i.e., if the initial abstraction returns unsatisfiable, the propositional formula is unsatisfiable under the current assignment (dual for universal player).
\begin{lemma}
  \label{thm:abstraction-correctness}
  The abstraction has the following properties:
  \begin{enumerate}
    \item For a quantifier alternation $\quant X \ldot \overline{\quant}\, Y$, the set of outer $B$ literals $B_X$ matches the set of inner $T$ literals $T_Y$, i.e., $\set{\psi \mid b_\psi \in B_X} = \set{\psi \mid t_\psi \in T_Y}$. \label{thm:abstraction-correctness-matching-sets}
    \item If $\theta_{X}$ is satisfiable under assumptions $\assignment_{T_{X}}$ where $X$ is the innermost quantifier, then for all assignments $\assignment$ with $\assignment \compat \assignment_{T_X}$, there is an assignment $\assignment^*$ with $\assignment \subassign \assignment^*$ such that $\Phi[\assignment^*]$ is true ($\quant = \exists$), respectively false ($\quant = \forall$). \label{thm:abstraction-inner-sat}
    \item If $\theta_X$ is unsatisfiable under assumptions $\beta_{T_X}$, then for all assignments $\assignment$ with $\assignment \compat \beta_{T_X}$ it holds that $\Phi[\assignment]$ is false if $\quant = \exists$, respectively true if $\quant = \forall$ (dual for $\overline\theta_X$). \label{thm:abstraction-initial}
  \end{enumerate}
\end{lemma}
\begin{proof}
  \begin{enumerate}
    \item Holds by definition of $T_X$ and $B_X$ (Section~\ref{sec:solving}).
    \item The $B$ variables at the innermost level correspond to the auxiliary variables in the encoding due to Plaisted and Greenbaum~\cite{journals/jsc/PlaistedG86}.
      Further, all outer quantified variables are replaced by $T$ variables.
      Both properties together show that the claim holds.
    \item For the innermost abstraction, this claim holds by the same argument as in $(2)$.
      For the other abstractions, note that the formula $\theta_X$ is weaker than the Plaisted-Greenbaum encoding: the encoding $\enc_\psi(\psi')$ of a subformula $\psi$ only takes other subformulas ($\sftype(\psi') \neq \lit$) into account if $\psi'$ is not influenced by a variable bound by an inner quantifier.\qedhere
  \end{enumerate}
\end{proof}

We now have all tools available to prove Theorem~\ref{thm:abstraction-qbf-correct}.
The first two invariants state that Lemma~\ref{thm:abstraction-correctness}.\ref{thm:abstraction-initial} holds during the execution of the algorithm, that is, also after the refinement steps.
The last two invariants connect variable assignments to the result of the recursive call of \textsc{abstraction-qbf-rec}.

\begin{proof}[Proof of Theorem~\ref{thm:abstraction-qbf-correct}]
  \textsc{abstraction-qbf-rec} maintains the following invariants that witness the correctness of \textsc{abstraction-qbf}.
  \begin{enumerate}
    \item If $\theta_X$ is unsatisfiable under assumptions $\beta_{T_X}$, then for all assignments $\assignment$ with $\assignment \compat \beta_{T_X}$ it holds that $\Phi[\assignment]$ is false if $\quant = \exists$, respectively true if $\quant = \forall$. \label{thm:correctness-invariant-abstraction}
    \item If $\overline\theta_X$ is unsatisfiable under assumptions $\beta_{T_X}$, then for all assignments $\assignment$ with $\assignment \compat \beta_{T_X}$ it holds that $\Phi[\assignment]$ is true if $\quant = \exists$, respectively false if $\quant = \forall$. \label{thm:correctness-invariant-dual-abstraction}
    \item If \textsc{abstraction-qbf-rec} returns $\tuple{\SAT,\beta_{T_X}}$, then for all $\assignment$ with $\assignment \compat \beta_{T_X}$ it holds that $\Phi[\assignment]$ is true. \label{thm:correctness-invariant-sat}
    \item If \textsc{abstraction-qbf-rec} returns $\tuple{\UNSAT,\beta_{T_X}}$, then for all $\assignment$ with $\assignment \compat \beta_{T_X}$ it holds that $\Phi[\assignment]$ is false. \label{thm:correctness-invariant-unsat}
  \end{enumerate}
  Claim (\ref{thm:correctness-invariant-abstraction}) and (\ref{thm:correctness-invariant-dual-abstraction}) hold initially by Lemma~\ref{thm:abstraction-correctness}.\ref{thm:abstraction-initial}.
  
  \emph{Base case.} $\Call{abstraction-qbf-rec}{\exists X \ldot \varphi, \assignment_{T_X}}$ where $\varphi$ is propositional (case $\forall$ is dual).
  Assume $\Call{sat}{\theta_X, \assignment_{T_X}}$ is unsatisfiable (line~\ref{alg:abstraction-sat-query}) with failed assumptions $\beta_{T_X}$.
  Then by (\ref{thm:correctness-invariant-abstraction}) for all assignments $\assignment$ with $\assignment \compat \beta_{T_X}$, $\Phi[\assignment]$ is false, proving (\ref{thm:correctness-invariant-unsat}).
  Assume $\Call{sat}{\theta_X, \assignment_{T_X}}$ is satisfiable (line~\ref{alg:abstraction-sat-query}), then by Lemma~\ref{thm:abstraction-correctness}.\ref{thm:abstraction-inner-sat} for all $\assignment^* \compat \assignment_{T_X}$, there is an assignment $\assignment$ with $\assignment \subassign \assignment^*$ such that $\Phi[\assignment]$ is true.
  Together with Lemma~\ref{thm:duality-optimization-abstraction} this proves (\ref{thm:correctness-invariant-sat}).
  
  \emph{Induction step.}
  $\Call{abstraction-qbf-rec}{\exists X \ldot \forall Y \dots \quant X_n \ldot \varphi, \assignment_{T_X}}$ (case $\forall$ is dual).
  Assume that $\Call{sat}{\theta_X, \assignment_{T_X}}$ is unsatisfiable (line~\ref{alg:abstraction-sat-query}) with failed assumptions $\beta_{T_X}$.
  Then by (\ref{thm:correctness-invariant-abstraction}) for all assignments $\assignment^*$ with $\assignment^* \compat \beta_{T_X}$, $\Phi[\assignment^*]$ is false, proving (\ref{thm:correctness-invariant-unsat}).
  Assume $\Call{sat}{\theta_X, \assignment_{T_X}}$ is satisfiable (line~\ref{alg:abstraction-sat-query}).
  The candidate $\assignment_{B_X}$ is translated into an assignment $\assignment_{T_Y}$ (Lemma~\ref{thm:abstraction-correctness}.\ref{thm:abstraction-correctness-matching-sets}).
  The following recursive call (line~\ref{alg:abstraction-recursive-call}) returns either $\SAT$ or $\UNSAT$.
  If the result is $\tuple{\SAT,\beta_{T_Y}}$, then by IH and claim (\ref{thm:correctness-invariant-sat}), for all $\assignment^*$ with $\assignment^* \compat \beta_{T_Y}$ it holds that $\Phi[\assignment^*]$ is true.
  Excluding $\beta_{T_Y}$ from $\overline\theta_X$ (line~\ref{alg:abstraction-refinement-dual}) thus preserves invariant (\ref{thm:correctness-invariant-dual-abstraction}).
  Using invariant (\ref{thm:correctness-invariant-dual-abstraction}) and Lemma~\ref{thm:duality-optimization-abstraction}, this proves (\ref{thm:correctness-invariant-sat}) when returning from line~\ref{alg:abstraction-return-recursion}.
  If the result is $\tuple{\UNSAT,\beta_{T_Y}}$, then by IH and claim (\ref{thm:correctness-invariant-unsat}), for all $\assignment^*$ with $\assignment^* \compat \beta_{T_Y}$ it holds that $\Phi[\assignment]$ is false.
  Excluding $\beta_{T_Y}$ from $\theta_X$ (line~\ref{alg:abstraction-refinement}) preserves invariant (\ref{thm:correctness-invariant-abstraction}).
  Completeness (the while loop cannot execute infinitely often) follows from the fact that there are only finitely many different blocking clauses.
\end{proof}

\begin{lemma} \label{thm:duality-optimization-abstraction}
  Let $\theta_X$ and let $\assignment_{T_X}$ be given.
  If $\assignment_X$ is a satisfying assignment of $\theta_X[\assignment_{T_X}]$, then $\Call{sat}{\overline\theta_X, \assignment_X \assunion \overline\assignment_{T_X}}$ returns $\UNSAT(\beta_{T_X})$ and for all $\assignment^*$ with $\overline\beta_{T_X} \subassign \assignment^*$, $\theta_X[\assignment^* \assunion \assignment_X]$ is true.
\end{lemma}
\begin{proof}
  Note that by definition of $\theta_X$ and $\overline\theta_X$, an assignment $\assignment_{T_X}$ in $\theta_X$ corresponds to an assignment $\overline\assignment_{T_X}$ in the dual abstraction $\overline\theta_X$, i.e, $\assignment_{T_X}$ and $\overline\assignment_{T_X}$ represent the same variable assignments (outer variables w.r.t.~$\quant X$) in $\theta_X$ and $\overline\theta_X$, respectively.
  As $\assignment_X$ is a satisfying assignment for $\theta_X[\assignment_{T_X}]$, $\overline\theta_X[\overline\assignment_{T_X} \assunion \assignment_X]$ is unsatisfiable.
  By definition of failed assumptions, $\beta_{T_X} \subassign \overline\assignment_{T_X}$ and $\Call{sat}{\overline\theta_X, \beta_{T_X}}$ returns $\UNSAT$, i.e., there is no $\assignment$ with $\beta_{T_X} \subassign \assignment$ that satisfies $\overline\theta_X$, hence, all $\assignment^*$ with $\overline\beta_{T_X} \subassign \assignment^*$ satisfy $\theta_X[\assignment_X]$.
\end{proof}

\section{Evaluation}

We implemented Algorithm~\ref{alg:abstraction} and its optimizations in a solver called $\quabs$%
\footnote{Source code available at \url{https://github.com/ltentrup/quabs}}
(Quantified Abstraction Solver) that takes QBFs in the standard format QCIR.
As the underlying SAT solver, we use CryptoMiniSat~\cite{conf/sat/SoosNC09}.
We compare $\quabs$ against the publicly available QBF solvers that support the QCIR format, namely $\ghostq$~\cite{conf/sat/KlieberSGC10}, $\qfun$~\cite{conf/aaai/Janota18}, $\cqesto$~\cite{conf/sat/Janota18}, and $\qute$~\cite{conf/sat/PeitlSS17}.
For our experiments, we used a machine with a $3.6\,\text{GHz}$ quad-core Intel Xeon processor and $32\,\text{GB}$ of memory.
The timeout and memout were set to $10$ minutes and $8\,\text{GB}$, respectively.

$\quabs$ has been independently evaluated in the annual QBF competition, called \emph{QBFEVAL} and the results of the latest evaluation are given in Table~\ref{tbl:qbfeval-2018}.
Notably, $\quabs$ solved most instances, 21 more than the second best solver.
The certification capabilities of $\quabs$ are used in the reactive synthesis tool $\bosy$~\cite{conf/cav/FaymonvilleFT17}, which won the synthesis track in the reactive synthesis competition~(SYNTCOMP) 2016 and 2017~\cite{journals/corr/JacobsBBK0KKLNP16,journals/corr/abs-1711-11439}.

\begin{table}[t]
  \caption{This table shows the number of solved instances within 10 minutes.}\smallskip
  \centering
  \hfill
  \subtable[QBFEVAL'18\label{tbl:qbfeval-2018}]{
  \begin{tabular}{lllll}
    \hline\noalign{\smallskip}
    Solver & Total & Sat & Unsat & Unique \\
    \noalign{\smallskip}\hline\noalign{\smallskip}
    $\quabs$  & 181 & 82 & 99 & 1 \\
    $\cqesto$ & 160 & 75 & 85 & 1 \\
    $\ghostq$ & 157 & 69 & 88 & 0 \\
    $\qfun$   & 139 & 74 & 65 & 5 \\
    $\qute$   & 116 & 42 & 74 & 0 \\
  \end{tabular}
  }\hfill
  \subtable[Petri Game Benchmarks\label{tbl:petri-games}]{
    \begin{tabular}{lllll}
    \hline\noalign{\smallskip}
    Solver & Total & Sat & Unsat & Unique \\
    \noalign{\smallskip}\hline\noalign{\smallskip}
    $\quabs$  & 195 & 123 & 72 & 14  \\
    $\cqesto$ & 189 & 127 & 62 & 11 \\
    $\qfun$   & 141 & 90  & 51 & 0 \\
    $\ghostq$ & 139 & 85  & 54 & 0 \\
    $\qute$   & 100 & 64  & 36 & 0 \\
  \end{tabular}
  }
  \hfill
\end{table}

\smallparagraph{Certification.}
We implemented the certification approach described in Section~\ref{sec:certification}, but instead of generating proof traces, we build the certificates (represented by And-Inverter Graphs) within the solving loop.
This enables building Skolem and Herbrand functions in parallel during solving and minimizes the certification overhead.
In the verification step, we use CryptoMiniSat to solve the functional correctness query.
The size of a certificate is measured as the number of AND gates.

We evaluate the certification capabilities of QuAbS on synthesis benchmark sets that are designed to take advantage of the structural problem definition.
The \emph{petri-games} benchmark set uses the bounded synthesis approach for Petri games~\cite{journals/corr/abs-1711-10637,journals/iandc/FinkbeinerO17}.
The \emph{safety-synt} benchmark set was created from the safety benchmarks of SYNTCOMP 2014~\cite{journals/sttt/JacobsBBEHKPRRS17}.
The \emph{bounded-synthesis} benchmark set was created from the tool $\bosy$~\cite{conf/cav/FaymonvilleFT17} using the QBF encoding of the reactive synthesis problem using LTL specifications~\cite{conf/tacas/FaymonvilleFRT17}.
The \emph{tree-models} benchmark set was created from LTL benchmarks of SYNTCOMP 2016~\cite{journals/corr/JacobsBBK0KKLNP16}.
All those benchmarks have in common that it is possible to directly build implementations from satisfiable queries.

The overall effect of the certification approach on the runtimes is negligible (less than $1\%$ increase) which we consider as achieving our goal that the combination of solving and certification can be implemented efficiently.
Table~\ref{tbl:number-of-certified-instances-per-family} shows the results of the certification run.
The number of verified instances is lower than the number of solved ones because the verifier exceeded the time- and memory-limit on some instances that could be solved within the limits.
To further reduce the size of certificates, one can employ circuit minimization techniques.
Especially compared to CNF certification, these results are very promising and could boost the use of QBF in synthesis applications.

\begin{table}[t]
\caption{Result of the certification run with timeout of 10 minutes for solving and verification, respectively. The average size of the certificate and the accumulated time spend on solving and verification are restricted to verified instances.}\smallskip
\label{tbl:number-of-certified-instances-per-family}
\centering
\begin{tabular}{lrrrrr}
\hline\noalign{\smallskip}
Benchmark set & \#solved & \#verified & avg.\,size &  solving [sec.] & verification [sec.] \\
\noalign{\smallskip}
\hline
\noalign{\smallskip}
petri-games & 136 & 120 & 90,699 & 5389 & 3508 \\
safety-synt & 160 & 144 & 41,125 & 153 & 2569 \\
bounded-synthesis & 339 & 339 & 11,390 & 4552 & 1457 \\
tree-models & 186 & 179 & 49,456 & 4032 & 9528 
\end{tabular}
\end{table}

\subsection{Case Study: Petri Games} \label{sec:pg}

In this case study, we outline how the certification capabilities of $\quabs$ can be used for the analysis of unrealizable Petri games and for the construction of implementations from winning strategies.
Petri games~\cite{journals/iandc/FinkbeinerO17,conf/cav/FinkbeinerGO15} represent the synthesis problem for distributed, asynchronous systems with causal memory.
The QBF encoding~\cite{conf/birthday/Finkbeiner15,journals/corr/abs-1711-10637} of those games is particularly challenging for CNF solvers: hardly any instance can be solved, even with enabled preprocessing and independent of the used solver, ruling out existing CNF certification approaches.
In contrast, non-CNF solvers scale much better as shown in Table~\ref{tbl:petri-games}, with $\quabs$ performing best overall.

\smallparagraph{Distributed Synthesis of Asynchronous Systems.}
The manual implementation of programs is a tedious and error-prone task.
The automatic synthesis of a correct implementation for a given specification can help the developer to focus on what requirements to fulfill instead of how to fulfill them.
The intricate communication of asynchronous processes in distributed systems would greatly benefit from the automatic synthesis of correct implementations for each process.
\emph{Petri games} define the synthesis problem of asynchronous, distributed systems with causal memory.
The system is \emph{distributed} in the sense that its consists of local processes with individual strategies without global controller.
The system is \emph{asynchronous} in the sense that local processes advance at individual pace and no global clock exists at which processes produce outputs.
Local strategies at a process can utilize \emph{causal memory} which only allows processes to exchange information upon synchronization.
Petri games are based on an underlying Petri net which makes it possible to utilize the unfolding as representation of causal memory.
The simplest winning condition for Petri games are bad places which the system has to avoid while the environment tries to reach such places.

Consider the example Petri game from Fig.~\ref{fig:pg-no-comm} where the system and the environment can both decide between left and right transitions and the bad place can only be avoided by opposite decisions.
Petri games are an extension of Petri nets where the places are distributed to either belong to the system (gray places) or to the environment (white places).
The tokens flowing through the underlying net now represent players depending on the type of place they are residing in: strategies of system players can restrict which outgoing transitions are allowed to fire whereas environment players decide the flow of tokens in the net.
In the game of Fig.~\ref{fig:pg-no-comm}, the choice of system and environment are \emph{independent}, i.e., the system player has no strategy to avoid the bad place: choosing either the left ($t_\mathit{sl}$) or right ($t_\mathit{sr}$) transition, the environment will do the same, leading the game to the bad place.

\begin{figure}[t]
\centering
    \subfigure[
    A Petri game where the system should not mimic the environment's behavior but no communication takes place prior to the system's decisions.
    \label{fig:pg-no-comm}]{
    \begin{tikzpicture}[node distance=1.25cm and 1cm, >=stealth', bend angle=45, auto, scale=0.8, on grid]
		\node [envplace, tokens=1] 											(e1)	[label=below:$e$]					{};
		\node [envplace, below of=e1, below of=e1, left of=e1] 				(el)  	[label=left:$\mathit{el}$]		    {};
		\node [envplace, below of=e1, below of=e1, right of=e1] 				(er)  	[label=right:$\mathit{er}$]		    {};
		\node [sysplace, below of=e1, below of=e1, tokens=1] 				(s1)  	[label=below:$s$]		     		{};
		\node [sysplace, below of=s1, below of=s1]at ($(el)!0.5!(s1)$)		(sl)  	[label=left:$\mathit{sl}$]		    {};
		\node [sysplace, below of=s1, below of=s1]at ($(s1)!0.5!(er)$)		(sr)  	[label=right:$\mathit{sr}$]		    {};
		\node [sysplace, below of=s1, below of=s1, below of=s1, accepting] 	(bad)  	[label=below:$\mathit{bad}$]		{};
		
		\node [transition, above of=sl, above of=sl, above of=sl] 			(t1) 	[label=left:$t_\mathit{el}$]		{};
		\node [transition, above of=sr, above of=sr, above of=sr] 			(t2) 	[label=right:$t_\mathit{er}$]		{};
		\node [transition, above of=sl] 										(t3) 	[label=left:$t_\mathit{sl}$]		{};
		\node [transition, above of=sr] 										(t4) 	[label=right:$t_\mathit{sr}$]		{};
		\node [transition, below of=sl, left of=sl] 							(t5) 	[label=below:$t_\mathit{bad1}$]		{};
		\node [transition, below of=sr, right of=sr] 						(t6) 	[label=below:$t_\mathit{bad2}$]		{};
		
		\path[-latex, thick]
		 	(t1) 	edge [pre]                            (e1)
					edge [post]                           (el)
		 	(t2) 	edge [pre]                            (e1)
					edge [post]                           (er)
			(t3) 	edge [pre]                            (s1)
					edge [post]                           (sl)
			(t4) 	edge [pre]                            (s1)
					edge [post]                           (sr)
			(t5) 	edge [pre]                            (el)
					edge [pre]                            (sl)
					edge [post]                           (bad)
			(t6) 	edge [pre]                            (er)
					edge [pre]                            (sr)
					edge [post]                           (bad)
		;
	\end{tikzpicture}
    }
\hspace{2mm}%
    \subfigure[
    The environment forwards its decision to the system and afterwards the system should not mimic this decision.
    \label{fig:pg-sync}]{
    \begin{tikzpicture}[node distance=1.25cm and 1cm, >=stealth', bend angle=45, auto, scale=0.8, on grid]
		\node [envplace, tokens=1] 											(e1)	[label=below:$e$]					{};
		\node [envplace, below of=e1, below of=e1, left of=e1] 				(el)  	[label=left:$\mathit{el}$]		    {};
		\node [envplace, below of=e1, below of=e1, right of=e1] 				(er)  	[label=right:$\mathit{er}$]		    {};
		\node [sysplace, below of=e1, below of=e1] 							(s1)  	[label=below:$s$]		     		{};
		\node [sysplace, below of=s1, below of=s1]at ($(el)!0.5!(s1)$)		(sl)  	[label=left:$\mathit{sl}$]		    {};
		\node [sysplace, below of=s1, below of=s1]at ($(s1)!0.5!(er)$)		(sr)  	[label=right:$\mathit{sr}$]		    {};
		\node [sysplace, below of=s1, below of=s1, below of=s1, accepting] 	(bad)  	[label=below:$\mathit{bad}$]		{};
		
		\node [transition, above of=sl, above of=sl, above of=sl] 			(t1) 	[label=left:$t_\mathit{el}$]		{};
		\node [transition, above of=sr, above of=sr, above of=sr] 			(t2) 	[label=right:$t_\mathit{er}$]		{};
		\node [transition, above of=sl] 										(t3) 	[label=left:$t_\mathit{sl}$]		{};
		\node [transition, above of=sr] 										(t4) 	[label=right:$t_\mathit{sr}$]		{};
		\node [transition, below of=sl, left of=sl] 							(t5) 	[label=below:$t_\mathit{bad1}$]		{};
		\node [transition, below of=sr, right of=sr] 						(t6) 	[label=below:$t_\mathit{bad2}$]		{};
		
		\path[-latex, thick]
		 	(t1) 	edge [pre]                            (e1)
					edge [post]                           (el)
					edge [post]                           (s1)
		 	(t2) 	edge [pre]                            (e1)
					edge [post]                           (er)
					edge [post]                           (s1)
			(t3) 	edge [pre]                            (s1)
					edge [post]                           (sl)
			(t4) 	edge [pre]                            (s1)
					edge [post]                           (sr)
			(t5) 	edge [pre]                            (el)
					edge [pre]                            (sl)
					edge [post]                           (bad)
			(t6) 	edge [pre]                            (er)
					edge [pre]                            (sr)
					edge [post]                           (bad)
		;
	\end{tikzpicture}
    }
\hspace{2mm}%
    \subfigure[
    A winning strategy where the system does not mimic the environment such that transitions to the bad place (dashed in blue) become unreachable.
    \label{fig:pg-strat}]{
    \begin{tikzpicture}[node distance=1.25cm and 1cm, >=stealth', bend angle=45, auto, scale=0.8, on grid]
		\node [envplace, tokens=1] 											(e1)	[label=below:$e$]					{};
		\node [sysplace, below of=e1, below of=e1, opacity=0] 				(s1)  	[]						     		{};
		\node [sysplace, below of=s1, below of=s1]at ($(el)!0.5!(s1)$)		(sl)  	[label=left:$\mathit{sl}$]		    {};
		\node [sysplace, below of=s1, below of=s1]at ($(s1)!0.5!(er)$)		(sr)  	[label=right:$\mathit{sr}'$]		{};
		\node [envplace, above of=sl, above of=sl, left of=sl] 				(el)  	[label=above:$\mathit{el}$]		    {};
		\node [envplace, above of=sr, above of=sr, right of=sr] 				(er)  	[label=above:$\mathit{er}$]		    {};
		\node [sysplace, right of=el]						 				(s1l)  	[label=left:$s$]		     		{};
		\node [sysplace, left of=er] 										(s1r)  	[label=right:$s'$]		     		{};
		\node [sysplace, below of=s1, below of=s1, below of=s1, accepting] 	(bad)  	[label=below:$\mathit{bad}$]		{};
		
		\node [transition, above of=sl, above of=sl, above of=sl] 			(t1) 	[label=left:$t_\mathit{el}$]		{};
		\node [transition, above of=sr, above of=sr, above of=sr] 			(t2) 	[label=right:$t_\mathit{er}$]		{};
		\node [transition, above of=sl] 										(t3) 	[label=left:$t_\mathit{sl}$]		{};
		\node [transition, above of=sr] 										(t4) 	[label=right:$t'_\mathit{sr}$]		{};
		\node [transition, below of=sl, left of=sl] 							(t5) 	[label=below:$t_\mathit{bad1}$]		{};
		\node [transition, below of=sr, right of=sr] 						(t6) 	[label=below:$t'_\mathit{bad2}$]	{};
		
		\path[-latex, thick]
		 	(t1) 	edge [pre]                            (e1)
					edge [post]                           (el)
					edge [post]                           (s1l)
		 	(t2) 	edge [pre]                            (e1)
					edge [post]                           (er)
					edge [post]                           (s1r)
			(t3) 	edge [pre]                            (s1r)
					edge [post]                           (sl)
			(t4) 	edge [pre]                            (s1l)
					edge [post]                           (sr)
		;
		
		\path[-latex, thick, dashed, blue]
			(t5) 	edge [pre]                            (el)
					edge [pre]                            (sl)
					edge [post]                           (bad)
			(t6) 	edge [pre]                            (er)
					edge [pre]                            (sr)
					edge [post]                           (bad)
		;
	\end{tikzpicture}
    }
\caption{An example workflow of designing a Petri game is outlined. $\quabs$ produces counterexamples to any strategy in the left Petri game. From there, it becomes clear that there is no information exchange between the system and the environment. Therefore, the design of the Petri game is changed to the one in the middle where the environment leaks its decision to the system. For this game, we can extract the winning strategy on the right using $\quabs$ which avoids the bad place as the system answers with opposite decisions to the decisions of the environment.}
\label{fig:petri-game}
\end{figure}

\smallparagraph{Strategy Construction and Strategy Refutation.}
As the Petri game in Fig.~\ref{fig:pg-no-comm} has no winning strategy, the QBF encoding~\cite{conf/birthday/Finkbeiner15} is unsatisfiable and $\quabs$ returns a certificate for the universal player.
This certificate represents a flow of tokens leading to the bad state for \emph{every} system strategy. 
When the system only decides to enable $t_\mathit{sl}$ and to not enable $t_\mathit{sr}$ then one counterexample moves the environment token from $e$ to $\mathit{el}$, the system token from $s$ to $\mathit{sl}$, and afterwards fires the transition to reach the bad place.
An analog counterexample is returned when the system enables $t_\mathit{sr}$ and does not enable $t_\mathit{sl}$.
When the system enables neither transition then the counterexample moves the environment token from $e$ to $\mathit{el}$ and then reaches a deadlock without termination.
This situation is forbidden for strategies as otherwise the winning condition of avoiding bad places would be a trivial.
When the system activates both transitions then already the initial marking constitutes a counterexample as the system's decision is non-deterministic.

From these counterexamples, we can derive that we have to introduce communication between the system and the environment.
The easiest way to do so is given in Fig.~\ref{fig:pg-sync} where the system player is created with the decision of the environment and then can only afterwards react to it.
The different causal memory of the system player in $s$ depending on whether $t_\mathit{el}$ or $t_\mathit{er}$ was fired results in the unfolding of $s$ (indicated by $'$), as depicted in Fig.~\ref{fig:pg-strat}.
Then, a winning strategy exists where the system player makes a different decision to the previous environment decision.
The satisfying assignment of $\quabs$ in the QBF encoding of this problem allows to directly remove not activated transitions ($t_\mathit{sr}$ and $t'_\mathit{sl}$) and their resulting unreachable parts of the game, making all transitions to the bad place unreachable (indicated as dashed blue lines in Fig.~\ref{fig:pg-strat}).

\section{Related Work}

Other QBF solving techniques that use structural information are conceptually very different, such as DPLL like~\cite{journals/constraints/EglySW09,conf/sat/GoultiaevaIB09,conf/sat/KlieberSGC10,conf/sat/PeitlSS17} and expansion~\cite{conf/sat/LonsingB08,conf/date/PigorschS09,journals/ai/JanotaKMC16,conf/aaai/Janota18}.
We extend work on QBF solving techniques that communicate the satisfaction of clauses through a recursive refinement algorithm~\cite{conf/ijcai/JanotaM15,conf/fmcad/RabeT15,conf/cav/Tentrup17} that were limited to conjunctive normal form.
Further, the maintenance of a dual abstraction for optimization is new in this context and the certification approach is different and, as shown in the evaluation, much more efficient than the one presented for $\caqe$~\cite{conf/fmcad/RabeT15}.
The structure of independent quantifiers in non-prenex formulas can be used for parallelization during solving for this kind of algorithms~\cite{conf/sat/Tentrup16}.
$\cqesto$~\cite{conf/sat/Janota18} is a recently introduced circuit solver based on a similar algorithm as presented in this paper.
The algorithm, however, differs in the way abstractions are built: we produce a ``static'' abstraction upfront and learn subformula valuations during solving, while $\cqesto$ evaluates the circuit under the current variable assignments and re-encodes the resulting partial circuit using the Tseitin transformation in each refinement step.
To our knowledge, $\cqesto$ cannot produce certificates.
Certification has been considered in the context of CNF solving techniques~\cite{journals/fmsd/BalabanovJ12,conf/sat/NiemetzPLSB12,conf/fmcad/HeuleSB14,conf/fmcad/RabeT15} but we are not aware of another work considering certification in the more general setting.
The duality of circuit based QBF solving has been used to enhance search based CNF solvers~\cite{conf/aaai/GoultiaevaB10,conf/date/GoultiaevaSB13} but this is different to our use of a dual abstraction during solving.

\section{Conclusion}

We presented a QBF solving algorithm that exploits the structure in the propositional formula.
Further, we defined a certification format suitable for this algorithm and described an efficient algorithm to extract solution witnesses from true, respectively false, QBFs.
We have implemented the solving and certification techniques in a tool called $\quabs$ which won the QBF competition QBFEVAL'18.
We have achieved our goal of the certification approach having nearly no overhead over pure solving approaches.
For the case study of Petri games, we outlined how the certification techniques of $\quabs$ allow the analysis of unrealizable Petri games and the construction of implementations for realizable Petri games.

\subsection*{Acknowledgments}
We thank Mikol{\'{a}}s Janota for reporting a problem with an earlier formulation of the abstraction and the anonymous reviewers for their helpful comments.

\bibliographystyle{eptcs}
\bibliography{main}

\end{document}